\documentclass[sigconf]{acmart}
\usepackage{algorithm}
\usepackage{algpseudocode}
\usepackage{array}
\newcolumntype{M}[1]{>{\centering\arraybackslash}m{#1}}
\newtheorem{theorem}{Theorem}

\usepackage{graphicx}
\AtBeginDocument{%
  }

\begin{document}

%%
%% The "title" command has an optional parameter,
%% allowing the author to define a "short title" to be used in page headers.
\title{Differentially Private Motif-Preserving Multi-modal Hashing}

%%
%% The "author" command and its associated commands are used to define
%% the authors and their affiliations.
%% Of note is the shared affiliation of the first two authors, and the
%% "authornote" and "authornotemark" commands
%% used to denote shared contribution to the research.
\author{Zehua Cheng}
\affiliation{
  \institution{Department of Computer Science\\University of Oxford}
  \city{Oxford}
  \country{United Kingdom}
}
\email{zehua.cheng@cs.ox.ac.uk}
\author{Wei Dai}
\affiliation{
  \institution{FLock.io}
  \city{London}
  \country{United Kingdom}
}
\email{weidai@flock.io}
\author{Jiahao Sun}
\affiliation{
  \institution{FLock.io}
  \city{London}
  \country{United Kingdom}
}
\email{sun@flock.io}

%%
%% By default, the full list of authors will be used in the page
%% headers. Often, this list is too long, and will overlap
%% other information printed in the page headers. This command allows
%% the author to define a more concise list
%% of authors' names for this purpose.
\renewcommand{\shortauthors}{Cheng et al.}

%%
%% The abstract is a short summary of the work to be presented in the
%% article.
\begin{abstract}
    Cross-modal hashing enables efficient retrieval by encoding images and text into compact binary codes. State-of-the-art methods rely on semantic similarity graphs derived from user interactions for supervision, yet these graphs encode sensitive behavioral patterns vulnerable to link reconstruction attacks. Existing privacy-preserving approaches fail on graph-structured data: Differentially Private SGD destroys relational motifs by treating samples independently, while graph synthesis methods suffer from unbounded local sensitivity---in scale-free networks, hub nodes cause single-edge modifications to alter triangle counts by $\mathcal{O}(N)$, necessitating prohibitive noise injection. We term this phenomenon \emph{Hubness Explosion}. We propose DMP-MH, a Sanitize-then-Distill framework that decouples privacy from representation learning. Our approach first bounds sensitivity by deterministically clipping node degrees, capping the $L_2$-sensitivity of triangle motifs independently of dataset size. A sanitized synthetic graph is then generated via Noisy Mirror Descent under $(\epsilon,\delta)$-Edge Differential Privacy. Finally, dual-stream hashing networks distill this topology using a holistic structural loss that enforces cross-modal alignment. Evaluated on MIRFlickr-25K and NUS-WIDE under a strict inductive protocol, DMP-MH outperforms private baselines by up to 11.4 mAP points while retaining up to 92.5\% of non-private performance. 
    % Our code is available at \url{https://anonymous.4open.science/r/DMP-MH-D4EB/}.
\end{abstract}

%%
%% The code below is generated by the tool at http://dl.acm.org/ccs.cfm.
%% Please copy and paste the code instead of the example below.
%%
\begin{CCSXML}
<ccs2012>
   <concept>
       <concept_id>10002951.10003317</concept_id>
       <concept_desc>Information systems~Information retrieval</concept_desc>
       <concept_significance>500</concept_significance>
       </concept>
   <concept>
       <concept_id>10002978.10003018.10003019</concept_id>
       <concept_desc>Security and privacy~Data anonymization and sanitization</concept_desc>
       <concept_significance>500</concept_significance>
       </concept>
 </ccs2012>
\end{CCSXML}

\ccsdesc[500]{Information systems~Information retrieval}
\ccsdesc[500]{Security and privacy~Data anonymization and sanitization}

%%
%% Keywords. The author(s) should pick words that accurately describe
%% the work being presented. Separate the keywords with commas.
\keywords{Differential Privacy, Multi-modal Retrieval, Cross-Modal Hashing}
%% A "teaser" image appears between the author and affiliation
%% information and the body of the document, and typically spans the
%% page.
% \received{20 February 2007}
% \received[revised]{12 March 2009}
% \received[accepted]{5 June 2009}

%%
%% This command processes the author and affiliation and title
%% information and builds the first part of the formatted document.
\maketitle

\section{Introduction}
Cross-modal retrieval systems—which match images to text and vice versa—underpin applications from e-commerce search to medical image annotation~\cite{wang2016comprehensive,baltruvsaitis2018multimodal}. 
To enable sub-linear query time over billion-scale databases, Cross-Modal Hashing (CMH) encodes heterogeneous data into compact binary codes, reducing retrieval to fast Hamming distance computation~\cite{wang2017survey,cao2017hashnet}. State-of-the-art hashing methods derive their supervisory signal from Semantic Similarity Graphs constructed from user interaction logs: click-through sequences, co-purchase patterns, and session co-occurrences~\cite{liu2020joint,su2019deep}. These graphs encode high-order community structures—particularly triangle motifs—that raw features fail to capture, providing the geometric scaffolding necessary for semantically coherent hash codes~\cite{benson2016higher}.

However, interaction graphs are inherently privacy-sensitive. A single edge may reveal that a user queried a specific medical image or that two users share a latent interest~\cite{zheleva2009join}. Releasing models trained on such graphs exposes systems to Link Reconstruction Attacks, where adversaries infer private associations from learned embeddings~\cite{he2021stealing,wu2022linkteller}. This vulnerability has motivated growing interest in privacy-preserving machine learning~\cite{dwork2014algorithmic}, yet existing solutions remain fundamentally limited.

Differentially Private Stochastic Gradient Descent (DP-SGD)~\cite{abadi2016deep} secures training by perturbing gradients with calibrated noise. While effective for independent data, DP-SGD is methodologically ill-suited for graph-based learning: it treats each sample as an isolated entity, obliterating the relational dependencies that define semantic similarity~\cite{zheng2022graph}. Alternative approaches that synthesize edge-private graphs face a different obstacle: unbounded local sensitivity~\cite{hay2009boosting}. 
In scale-free interaction graphs, hub nodes---such as generic product images---connect to thousands of items following power-law degree distributions~\cite{barabasi1999emergence}.
Adding a single edge to such a hub can alter triangle counts by $\mathcal{O}(N)$, forcing privacy mechanisms to inject noise so large that the synthetic graph becomes structurally meaningless~\cite{nissim2007smooth}. We term this phenomenon Hubness Explosion and identify it as the primary barrier to scalable graph privacy.

To bridge this gap, we propose DMP-MH (Differentially Private Motif-Preserving Multi-modal Hashing), a framework that decouples privacy preservation from representation learning via a Sanitize-then-Distill protocol. Our key insight is that rigorous privacy in scale-free networks requires enforcing sensitivity bounds before synthesis, not during gradient computation. 
Concretely, we first construct a Sensitivity-Bounded Graph by deterministically clipping node degrees to a threshold $D_\text{max}$, mathematically capping the $L_2$-sensitivity of triangle motifs independent of dataset size. This approach draws inspiration from recent advances in local sensitivity analysis for graph statistics~\cite{kasiviswanathan2013analyzing,chen2013recursive}.
We then generate a sanitized synthetic graph via Noisy Mirror Descent~\cite{nemirovski2009robust}, injecting Gaussian noise calibrated to the bounded sensitivity rather than the global graph cardinality. Finally, we train a Dual-Stream Hashing Network guided by a Holistic Structural Loss that explicitly aligns both image and text embeddings to the private manifold, resolving the modality misalignment problem that plagues prior unsupervised methods~\cite{jiang2019discrete}.

We validate DMP-MH under a strict inductive evaluation protocol---where query items are unseen during graph construction---on MIRFlickr-25K and NUS-WIDE. Under a moderate privacy budget of $\epsilon = 2.0$, DMP-MH achieves an average mAP of 0.731 on MIRFlickr-25K, outperforming the strongest private baseline (PPPL) by 4.2 percentage points and retaining 92.5\% of the non-private state-of-the-art (JDSH). On the larger NUS-WIDE benchmark ($N > 195$K), DMP-MH surpasses DP-SGD by 11.4 points while training $2.1\times$ faster and using 48\% less GPU memory. Even under strict privacy regimes ($\epsilon = 0.1$), DMP-MH preserves 77\% of non-private utility—demonstrating graceful degradation where competing methods collapse.

Our contributions are summarized as follows:
\begin{itemize}
    \item We identify unbounded local sensitivity in scale-free graphs as the fundamental barrier to differentially private hashing and propose a degree-clipping protocol that bounds triangle-motif sensitivity independent of dataset size.
    \item We introduce a two-phase paradigm where a synthetic graph is generated under $(\epsilon, \delta)$-Edge DP via Noisy Mirror Descent, then distilled into hash codes through post-processing—incurring zero additional privacy cost during network training.
    \item We propose a symmetric distillation objective with rectified log-normalization that enforces cross-modal alignment, ensuring balanced retrieval performance across Image-to-Text and Text-to-Image tasks.
    \item We provide theoretical privacy and utility guarantees and validate DMP-MH across multiple privacy budgets, demonstrating state-of-the-art privacy-utility trade-offs with significant computational efficiency gains.
\end{itemize}
\section{Related Work}
\subsection{Cross-Modal Hashing}
Cross-modal hashing has emerged as the dominant paradigm for efficient large-scale retrieval across heterogeneous data modalities~\cite{wang2016comprehensive}. Early approaches relied on hand-crafted features and shallow projections. Canonical Correlation Analysis (CCA) and its kernelized variants learn maximally correlated subspaces across modalities but fail to capture nonlinear semantic structures~\cite{hardoon2004canonical,rasiwasia2010new}. Inter-Media Hashing (IMH) extends this by learning hash functions that preserve both intra-modal and inter-modal similarities~\cite{song2013inter}. Collective Matrix Factorization Hashing (CMFH)~\cite{ding2014collective} employs latent factor models to discover shared representations across modalities. However, these shallow methods suffer from limited representational capacity when applied to complex, high-dimensional multimedia data.

The advent of deep learning revolutionized cross-modal hashing. Deep Cross-Modal Hashing (DCMH) jointly learns feature representations and hash codes through end-to-end optimization with pairwise similarity preservation~\cite{jiang2019discrete}. Pairwise Relationship guided Deep Hashing (PRDH) exploits fine-grained pairwise relationships to guide discriminative hash code learning~\cite{yang2017pairwise}. Self-Supervised Adversarial Hashing (SSAH) introduces adversarial learning to bridge the heterogeneity gap between modalities~\cite{li2018self}. More recently, unsupervised methods have gained traction due to the prohibitive cost of manual annotation. Deep Joint-Semantics Reconstructing Hashing (DJSRH) reconstructs semantic affinities without explicit labels by leveraging joint semantic information~\cite{su2019deep}. Joint-modal Distribution-based Similarity Hashing (JDSH) models the joint distribution of multi-modal features to capture fine-grained semantic similarities~\cite{liu2020joint}. These methods increasingly rely on semantic similarity graphs constructed from user interactions—click-through logs, co-purchase patterns, and session co-occurrences—to provide supervisory signals. While effective, this reliance introduces significant privacy vulnerabilities that remain largely unaddressed.
\begin{table}[t]\centering
\caption{Comparison of Privacy-Preserving Retrieval Methods\label{tab:related_works}}
\resizebox{0.49\textwidth}{!}{
\begin{tabular}{c|M{2cm}|M{1.1cm}|c|c}\toprule
Method & Privacy Mechanism & Graph Support & Cross-Modal & Scalable \\\midrule
LSH~\cite{gionis1999similarity} & None & $\times$ & $\times$ & $\checkmark$ \\\hline
DCMH~\cite{jiang2017deep} & None & Implicit      & $\checkmark$           & $\checkmark$        \\\hline
JDSH~\cite{liu2020joint} & None                      & Explicit      & $\checkmark$           & $\checkmark$        \\\hline
DP-SGD~\cite{abadi2016deep} & Gradient Perturbation     & $\times$            & $\checkmark$           & $\times$       \\\hline
PPPL~\cite{zhang2023proactive} & Proactive Mechanism       & Implicit      & $\checkmark$           & $\checkmark$        \\\hline
\textbf{DMP-MH (Ours)} & Edge DP + Motif Synthesis & $\checkmark$             & $\checkmark$           & $\checkmark$       \\\bottomrule
\end{tabular}}
\end{table}

\subsection{Privacy-Preserving Retrieval and Hashing}
The intersection of privacy and retrieval has received increasing attention. Early work focused on cryptographic approaches: Secure Multi-Party Computation (MPC) enables retrieval without revealing queries or database contents, while Private Information Retrieval (PIR) allows users to query databases without revealing which items they accessed~\cite{chor1998private}. However, these methods incur substantial computational overhead, rendering them impractical for large-scale multimedia retrieval.

Differential privacy offers a more scalable alternative. Locality-Sensitive Hashing under differential privacy was explored by \cite{kenthapadi2012privacy}, who analyzed the privacy properties of random projections. More recently,\citet{zhang2023proactive} proposed Proactive Privacy-Preserving Learning (PPPL) for cross-modal retrieval, employing a proactive mechanism to protect user data while maintaining retrieval performance. However, PPPL does not explicitly address the graph-based supervision common in modern hashing methods. Attempts to apply DP-SGD directly to graph-supervised hashing result in severe performance degradation because gradient perturbation destroys the relational motifs essential for semantic alignment.

Table~\ref{tab:related_works} summarizes the positioning of DMP-MH relative to existing approaches across four key dimensions: privacy mechanism, graph support, cross-modal alignment, and scalability.
\begin{figure*}[t]
    \centering
    \includegraphics[width=0.98\linewidth]{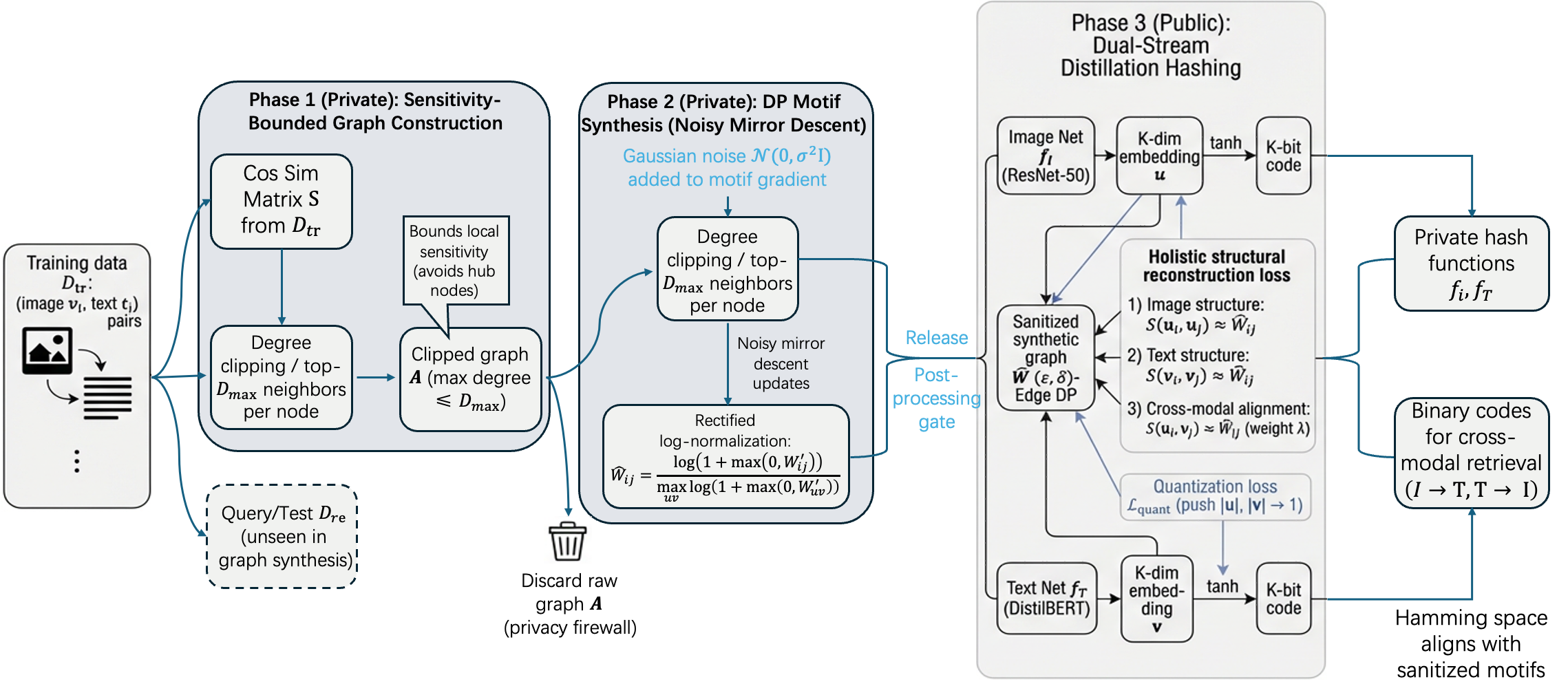}
    \caption{Overview of the DMP-MH framework. The proposed method operates through three sequential phases. \textbf{Phase 1 (Private):} Sensitivity-Bounded Graph Construction takes training pairs and computes a cosine similarity matrix, then clips node degrees to at most $D_\text{max}$ neighbors to bound local sensitivity and avoid hubness explosion. \textbf{Phase 2 (Private):} Differentially Private Motif Synthesis employs Noisy Mirror Descent with Gaussian noise $\mathcal{N}(0,\sigma^2\mathbf{I})$ added to the motif gradient, followed by rectified log-normalization to produce a sanitized synthetic graph $\hat{W}$ satisfying $(\epsilon, \delta)$-Edge DP. The raw graph is then discarded as a privacy firewall. \textbf{Phase 3 (Public):} Dual-Stream Distillation Hashing trains an Image Net (ResNet-50~\cite{he2016deep}) and Text Net (DistilBERT~\cite{sanh2019distilbert}) to reconstruct the topology of $\hat{W}$ via a holistic structural loss that enforces intra-modal and cross-modal alignment, with quantization regularization pushing embeddings toward binary codes. Query/test data $\mathcal{D}_\text{te}$ remain unseen during graph construction to ensure strict inductive evaluation.}
    \label{fig:overall}
\end{figure*}
\section{Methodology}
\subsection{Problem Formulation}
The central challenge in privacy-preserving cross-modal retrieval is reconciling the need for high-fidelity semantic representation with the imperative to protect the topological privacy of user interactions. We consider a supervised learning scenario where we possess a multimodal dataset $\mathcal{D} = \{(v_i, t_i)\}_{i=1}^N$, consisting of image-text pairs. In conventional settings, the semantic relationships between these items are inferred from a semantic similarity graph $G=(V, E)$, derived from user behaviors such as click-throughs or shared tagging. While this graph provides the ``ground-truth'' geometry for training effective hashing models, it represents a significant privacy vulnerability. An adversary with access to the graph structure could leverage link reconstruction attacks to infer sensitive user connections or latent community affiliations.

Our objective is to learn two hash functions, $f_I(v; \theta_I)$ and $f_T(t; \theta_T)$, that map images and text respectively to $K$-bit binary codes. The Hamming distance between these codes must approximate the community structure of $G$. Crucially, the training process must satisfy $(\epsilon, \delta)$-Edge Differential Privacy (Edge-DP). This guarantee ensures that the probability distribution of the released model parameters (and resulting hash codes) remains statistically indistinguishable whether any single semantic link exists in $G$ or not.

To resolve the discrepancy between the discrete nature of hashing and the requirements of differential privacy, we adopt a Sanitize-then-Distill paradigm. Rather than training the hashing network directly on the sensitive graph---which would require complex gradient perturbation methods that degrade utility---we first generate a sanitized synthetic graph $G_{syn}$. This intermediate structure serves as a privacy firewall, generated via a motif-preserving mechanism that injects noise proportional to the graph's local sensitivity. The hashing network is then trained in a post-processing manner to reconstruct the topology of $G_{syn}$.

In contrast to transductive approaches commonly found in the literature, our protocol is strictly inductive: the training set $\mathcal{D}_{tr}$ and the query set $\mathcal{D}_{te}$ are disjoint, the sensitive graph $G$ is constructed exclusively from interactions within the training partition, and query items remain unseen during graph construction and synthesis. 
% This protocol eliminates the data leakage phenomenon where models inadvertently memorize the neighborhood of test queries, ensuring that our reported performance metrics reflect genuine generalization capabilities.

\subsection{Overview}
The overall structure of the proposed framework is presented in Figure~\ref{fig:overall}. Direct application of gradient perturbation methods such as DP-SGD to graph data is fundamentally flawed: these methods treat training samples as independent entities, destroying the fine-grained motif structures essential for distinguishing semantic clusters from noise~\cite{abadi2016deep}. To address this, DMP-MH employs a Sanitize-then-Distill paradigm that decouples privacy preservation from representation learning. We first generate a sanitized, differentially private synthetic graph as a secure geometric proxy, then treat hashing as a public distillation problem. This separation enables standard optimization techniques without accumulating additional privacy costs during training.

A critical challenge is preserving higher-order topological features under privacy constraints. In conventional k-Nearest Neighbor graphs, hub nodes can possess disproportionately high degrees, rendering local sensitivity of structural statistics effectively unbounded. Under strict differential privacy, such unbounded sensitivity necessitates noise magnitudes that destroy the semantic signal. Our framework addresses this through Sensitivity-Bounded Graph Construction: by enforcing a hard constraint on maximum node degree prior to synthesis, we mathematically bound the local sensitivity of triangle counts, enabling the subsequent Noisy Mirror Descent~\cite{nemirovski2009robust} to generate a weighted adjacency matrix with verifiable community structures under strict $(\epsilon, \delta)$-Edge Differential Privacy constraints.

Finally, to encode the synthetic topology into binary codes, we employ a Symmetric Dual-Stream Hashing Network guided by a Holistic Structural Loss. Unlike prior methods that apply structural supervision asymmetrically---optimizing image embeddings while leaving text encoders under-constrained---we explicitly align both modalities to the sanitized manifold, ensuring consistent retrieval accuracy across both Image-to-Text and Text-to-Image tasks.
\subsection{Sensitivity-Bounded Graph Construction}
The efficacy of our framework hinges on the quality of the synthetic graph $G_{syn}$. Standard approaches to graph construction, such as symmetrized $k$-Nearest Neighbor ($k$NN) graphs, suffer from a fatal theoretical flaw when applied to differential privacy: unbounded local sensitivity.

In high-dimensional data, hub nodes (e.g., generic images) can appear in the nearest-neighbor lists of a vast number of items. Under standard ``OR'' symmetrization ($A_{ij}=1 \iff j \in \text{kNN}(i) \lor i \in \text{kNN}(j)$), the degree of a hub can approach $N$. Consequently, the local sensitivity of structural statistics such as triangle counts scales linearly with $N$. Satisfying differential privacy under these conditions would require noise magnitudes that completely obliterate the semantic signal.

We therefore enforce a hard upper bound $D_\text{max}$ on the degree of every node prior to synthesis. Concretely, for each node $i$ we retain only its top-$D_\text{max}$ neighbors under cosine similarity and adopt an ``AND'' symmetrization ($A_{ij}=1 \iff j \in \text{kNN}(i) \land i \in \text{kNN}(j)$) to preserve the degree bound after symmetrization. This deterministic clipping mathematically caps the $L_2$-sensitivity of the triangle-motif statistic at $\Delta_2 = O(D_\text{max} \cdot w_\text{max})$, independent of dataset size $N$.

\subsection{Differentially Private Motif Synthesis}

With the sensitivity explicitly bounded, we proceed to generate the synthetic graph $G_\text{syn}$. We utilize a Noisy Mirror Descent~\cite{nemirovski2009robust} algorithm to solve for a weighted adjacency matrix $W'$ that preserves the distribution of triangle motifs. Motifs (3-cycles) are chosen as the preservation target because they are robust indicators of strong semantic communities, offering higher retrieval utility than simple edge counts. The synthesis is formulated as a convex optimization problem where we minimize the distance between the triangle counts of the private graph and the raw clipped graph, regularized by a negative entropy term.

To ensure $(\epsilon, \delta)$-DP, we inject Gaussian noise into the gradient of the objective function during the optimization. The gradient update at step $t$ takes the form:
\begin{equation}\small
    W^{(t+1)} \leftarrow \arg\min_{W} \left( \langle \nabla \mathcal{L}(W^{(t)}) + \mathcal{N}(0, \sigma^2 \mathbf{I}), W \rangle + \frac{1}{\eta} D_{KL}(W || W^{(t)}) \right)
\end{equation}
Here, the noise scale $\sigma$ is calibrated based on the sensitivity bound $D_\text{max}$ derived in Stage 1, rather than the dataset cardinality $N$. This ensures that the noise added is sufficient to mask the presence of any single edge in the clipped graph, satisfying the privacy definition.

A critical implementation detail often overlooked is the support of the noise distribution. Gaussian noise can generate negative edge weights, which lack physical interpretation in a similarity graph and destabilize logarithmic loss functions. We resolve this via a Rectified Log-Normalization step. The final target matrix $\hat{W}$ is computed by clamping negative values and compressing the dynamic range:
\begin{equation}
    \hat{W}_{ij} = \frac{\log(1 + \max(0, W'_{ij}))}{\max_{uv} \log(1 + \max(0, W'_{uv}))}
\end{equation}
This transformation maps the unbounded, noisy triangle counts to a stable probability-like interval $[0, 1]$. The $\max(0, \cdot)$ operation filters out negative noise artifacts before they can propagate to the hashing network.

\subsection{Dual-Stream Holistic Hashing\label{sec:dual-stream}}

The final stage involves training the deep hashing network to distill the topology of $\hat{W}$ into binary codes. We employ a dual-stream architecture: an Image Net ($f_I$) based on ResNet-50~\cite{he2016deep} and a Text Net ($f_T$) based on DistilBERT~\cite{sanh2019distilbert}. Both networks project inputs to a shared $K$-dimensional Hamming space, approximating binary outputs via a tanh activation. The core innovation here is the Holistic Structural Loss. Previous approaches often minimize reconstruction error only for the image stream, implicitly assuming the text stream will align via shared labels—an assumption that fails in privacy-preserving unsupervised settings where labels are unavailable or unreliable.

We explicitly penalize the deviation of both intra-modal and inter-modal similarities from the synthetic target $\hat{W}$. Let $S(u, v) = \frac{1}{2}(\frac{1}{K} u^\top v + 1)$ represent the normalized cosine similarity in the hashing space. The loss function is defined as:

\begin{equation}
    \resizebox{0.48\textwidth}{!}{
    $\mathcal{L}_\text{recon} = \sum_{i,j \in \mathcal{B}} \left[ \underbrace{(S(u_i, u_j) - \hat{W}_{ij})^2}_{\text{Image Structure}} + \underbrace{(S(v_i, v_j) - \hat{W}_{ij})^2}_{\text{Text Structure}} + \lambda \underbrace{(S(u_i, v_j) - \hat{W}_{ij})^2}_{\text{Cross-Modal Alignment}} \right]$
    }
\end{equation}

By including the cross-term with weight $\lambda > 0$, we force the text embeddings to respect the same geometric constraints as the images. This ensures that even if the text modality lacks rich intrinsic features, it is guided into the correct semantic clusters by the shared synthetic graph structure. Finally, to mitigate quantization error, we impose a regularization term $\mathcal{L}_{quant} = \sum (\| |u_i| - \mathbf{1} \|^2 + \| |v_i| - \mathbf{1} \|^2)$ that forces the continuous embeddings towards the discrete hypercube vertices.

\begin{algorithm}[t]
\caption{DMP-MH: Sanitize-then-Distill Training Protocol}
\begin{algorithmic}[1]
\Require Training Data $\mathcal{D}_{tr}$, Privacy Budget $(\epsilon, \delta)$, Degree Bound $D_\text{max}$.
\Ensure Hash Functions $f_I(\cdot), f_T(\cdot)$.
\State \textbf{--- Phase 1: Sensitivity-Bounded Synthesis (Private) ---}
\State Construct cosine similarity matrix $S$ from $\mathcal{D}_{tr}$.
\State \textbf{Graph Clipping:} Generate $A$ by keeping top-$D_\text{max}$ neighbors per node.
\State \quad \emph{Ensure $\max_i \text{deg}(i) \le D_{max}$ via pruning.}
\State Calculate Noise Scale: $\sigma \leftarrow \text{Calibrate}(D_{max}, \epsilon, \delta)$.
\State Initialize synthetic weights $W^{(0)}$.
\For{$t = 1$ to $T$}
    \State Compute motif gradient $\nabla \mathcal{L}(W^{(t)})$.
    \State Perturb: $\tilde{g} \leftarrow \nabla \mathcal{L} + \mathcal{N}(0, \sigma^2 \mathbf{I})$.
    \State Update: $W^{(t+1)} \leftarrow \text{MirrorDescent}(W^{(t)}, \tilde{g})$.
\EndFor
\State \textbf{Sanitization:} $\hat{W}_{ij} \leftarrow \text{LogNorm}(\max(0, W^{(T)}_{ij}))$.
\State \emph{Discard raw graph $A$. Release $\hat{W}$.}

\State \textbf{--- Phase 2: Dual-Stream Distillation (Public) ---}
\State Initialize $\theta_I, \theta_T$.
\Repeat
    \State Sample batch $\mathcal{B}$ from $\mathcal{D}_{tr}$ indices.
    \State Get targets $\hat{W}_{\mathcal{B}}$ (lookup in sanitized graph).
    \State Forward: $u = f_I(x_{\mathcal{B}}), v = f_T(y_{\mathcal{B}})$.
    \State Loss: $\mathcal{L} = \mathcal{L}_{recon}(u, v, \hat{W}_{\mathcal{B}}) + \gamma \mathcal{L}_{quant}(u, v)$.
    \State Update $\theta_I, \theta_T$ via Adam~\cite{kingma2014adam}.
\Until{Convergence}
\end{algorithmic}
\end{algorithm}

\section{Theoretical Analysis}
\begin{theorem}[Privacy Guarantee]
    The Sensitivity-Bounded Graph Synthesis phase (Stage 2) satisfies $(\epsilon, \delta)$-Edge Differential Privacy with respect to the input interaction graph $G_{tr}$. Furthermore, the subsequent Dual-Stream Distillation (Stage 3) satisfies the same guarantee.
\end{theorem}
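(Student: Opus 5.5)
The proof has two parts. First, view Stage 2 as an adaptive composition of $T$ Gaussian mechanisms acting on the clipped graph $A$. Then obtain the Stage 3 claim from the post-processing property of differential privacy.

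Fix two edge-neighboring interaction graphs $G_{tr}$ and $G'_{tr}$ that differ in a single edge, and let $A$ and $A'$ be their degree-clipped versions from Sensitivity-Bounded Graph Construction. The clipping map is deterministic and keeps the top-$D_\text{max}$ neighbors with ``AND'' symmetrization. I will argue that $A$ and $A'$ differ in a bounded number of entries, and that every node has degree at most $D_\text{max}$ in both graphs. One edge flip can change a node's top-$D_\text{max}$ list by at most a constant number of swaps, so only $O(1)$ edges change. I then bound the sensitivity of the motif gradient. The objective $\mathcal{L}(W)$ depends on the data only through the triangle-count targets of the clipped graph. Adding or removing an edge $(i,j)$ changes only the triangles through $(i,j)$, and there are at most $D_\text{max}$ of them, each contributing at most $w_\text{max}$. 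This gives $\sup_{W}\|\nabla\mathcal{L}(W;A)-\nabla\mathcal{L}(W;A')\|_2\le \Delta_2 = O(D_\text{max} w_\text{max})$, which is exactly the bound stated in the construction section. I take that bound as given, and note that it must hold uniformly over iterates $W$. That uniformity is plausible because mirror descent under the KL geometry keeps $W$ in a bounded simplex-type domain.

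Next, each iteration releases $\tilde g_t=\nabla\mathcal{L}(W^{(t)})+\mathcal{N}(0,\sigma^2\mathbf{I})$. The point $W^{(t)}$ is a function of earlier noisy gradients only, so each step is a Gaussian mechanism with $\ell_2$-sensitivity $\Delta_2$, and the steps are composed adaptively. Each step satisfies $\mu$-GDP, or equivalently $(\alpha,\alpha\Delta_2^2/(2\sigma^2))$-RDP. Adaptive RDP composition over $T$ steps gives $(\alpha, T\alpha\Delta_2^2/(2\sigma^2))$-RDP. The standard RDP-to-DP conversion then yields $(\epsilon,\delta)$-DP as long as $\mathrm{Calibrate}$ chooses something like $\sigma \ge \Delta_2\sqrt{2T\log(1/\delta)}/\epsilon$ up to constants. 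I will treat this as the definition of $\mathrm{Calibrate}(D_\text{max},\epsilon,\delta)$. The mirror-descent argmin step, the final $\max(0,\cdot)$ and log-normalization, and the output $\hat W$ are all deterministic functions of $(\tilde g_1,\dots,\tilde g_T)$ and the data-independent $W^{(0)}$. Post-processing therefore gives $(\epsilon,\delta)$-Edge DP for $\hat W$.

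For Stage 3, the hashing networks are trained only from $\hat W$, the features $\mathcal{D}_{tr}$, and independent randomness: initialization, batch sampling, and Adam. The raw graph $A$ is discarded. Edge-DP protects only the edge set, and the features are the same in neighboring datasets, so the trained map is a randomized function of $\hat W$ alone. Post-processing then transfers the $(\epsilon,\delta)$ guarantee to $(\theta_I,\theta_T)$ and to every released hash code.

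The main obstacle is the sensitivity step. Degree clipping is data-dependent, and one added edge could in principle cascade through the top-$D_\text{max}$ selection. There is a further problem: the clipping is computed from cosine similarities of features, not from the interaction graph. If the kNN graph is itself built from the edges, I would need to define neighboring graphs after clipping, meaning edge-DP with respect to $A$. I would try that definitional route first and state it explicitly as the privacy unit. Failing that, I would bound the number of changed clipped edges by a constant $c$ and absorb it into $\Delta_2=O(cD_\text{max}w_\text{max})$.
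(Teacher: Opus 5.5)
Your proposal matches the paper's own argument. Degree clipping caps the motif-gradient sensitivity at $\Delta_2 = O(D_\text{max} w_\text{max})$, each Noisy Mirror Descent step is a Gaussian mechanism calibrated to $\Delta_2$, and the $T$ steps are composed; you use RDP where the paper cites advanced composition, and RDP if anything fits the stated $\sigma \propto \Delta_2\sqrt{T\ln(1/\delta)}/\epsilon$ better. Finally, $\hat W$ and Stage~3 follow by post-processing.

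Your concern about the clipping map goes beyond the paper, which silently counts edge changes in the clipped graph. Use your fallback rather than redefining the privacy unit, since a privacy unit on $A$ would not give the stated guarantee with respect to $G_{tr}$. The fallback is easy if the top-$D_\text{max}$ ranking uses fixed feature similarities. Toggling $(i,j)$ changes only the candidate lists of $i$ and $j$, each by at most one swap. Under AND symmetrization at most three entries of $A$ change, so there is no cascade.
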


\noindent \textit{Proof Sketch.} The privacy guarantee is predicated on bounding the local sensitivity of the triangle-motif statistic, which we enforce via the degree-clipping protocol in Stage~1. In general geometric graphs, the local sensitivity---defined as the maximum $L_2$-norm change in the gradient of the motif loss under the modification of a single edge---can scale linearly with the dataset size $N$ in the presence of hub nodes. Such unbounded sensitivity would necessitate noise magnitudes that render the synthetic output structurally vacuous. However, by enforcing that no node in the pre-processed graph exceeds degree $D_\text{max}$, we guarantee that the addition or removal of any edge $e_{ij}$ affects at most $D_\text{max}$ triangles (its common neighbors). Consequently, the $L_2$-sensitivity of the gradient satisfies $\Delta_2 = O(D_{max} \cdot w_{max})$, where $w_{max}$ is the maximum edge weight. This truncation imposes a hard Lipschitz constraint on the optimization landscape, decoupling the privacy cost from the global dataset size.

\begin{table*}[t]\centering
\caption{Mean Average Precision (mAP@50) Comparison on MIRFlickr-25K Dataset. The best non-private results are underlined; the best privacy-preserving results are bolded. \label{tab:mirflickr-25k-exp}}
\begin{tabular}{c|c|c|c|c|c|c|c|c}\toprule
Method          & Type             & I $\to$ T (16 bits) & I $\to$ T (32 bits) & I $\to$ T (64 bits) & T $\to$ I (16 bits) & T $\to$ I (32 bits) & T $\to$ I (64 bits) & Avg mAP        \\\midrule
LSH             & Classic & 0.563          & 0.572          & 0.581          & 0.559          & 0.564          & 0.575          & 0.569          \\
SH              & Classic & 0.612          & 0.620          & 0.628          & 0.605          & 0.611          & 0.619          & 0.616          \\\midrule
DCMH            & Deep    & 0.732          & 0.745          & 0.751          & 0.718          & 0.729          & 0.735          & 0.735          \\
DJSRH           & Deep    & 0.772          & 0.785          & 0.791          & 0.763          & 0.778          & 0.784          & 0.779          \\
JDSH            & Deep    & \underline{0.785} & \underline{0.798} & \underline{0.804} & \underline{0.771} & \underline{0.789} & \underline{0.795} & \underline{0.790} \\
PRDH            & Deep    & 0.741          & 0.753          & 0.758          & 0.725          & 0.738          & 0.744          & 0.743          \\\midrule
DP-SGD          & Private & 0.621          & 0.635          & 0.648          & 0.605          & 0.619          & 0.632          & 0.627          \\
PPPL            & Private & 0.685          & 0.698          & 0.705          & 0.672          & 0.684          & 0.692          & 0.689          \\
\textbf{DMP-MH} & Private & \textbf{0.725} & \textbf{0.739} & \textbf{0.748} & \textbf{0.712} & \textbf{0.726} & \textbf{0.734} & \textbf{0.731} \\\bottomrule
\end{tabular}
\end{table*}

With the sensitivity explicitly bounded, the synthesis algorithm operates as a sequence of Noisy Mirror Descent updates. At each iteration $t \in [1, T]$, the mechanism computes the gradient of the motif-preservation objective and perturbs it with Gaussian noise $\mathcal{Z} \sim \mathcal{N}(0, \sigma^2 \mathbf{I})$. We calibrate the noise scale $\sigma$ based on the derived bound $\Delta_2$. According to the Advanced Composition Theorem for Differential Privacy~\cite{dwork2010boosting,dwork2014algorithmic}, the cumulative privacy loss over $T$ iterations is bounded by $\epsilon$ with probability $1-\delta$, provided the per-step noise satisfies $\sigma \propto \Delta_2 \sqrt{T \ln(1/\delta)} / \epsilon$. Since the subsequent training of the hashing networks depends exclusively on the output $\hat{W}$ and public feature data, it constitutes a post-processing step. By the Data Processing Inequality, this phase incurs zero additional privacy loss, ensuring the final hash codes inherit the robust $(\epsilon, \delta)$-Edge DP guarantee of the synthetic graph.

\begin{theorem}[Utility and Generalization Bound]
    Let $\hat{W}$ be the synthetic graph generated by DMP-MH. The additive error in the triangle-motif cut capacity between $\hat{W}$ and the ground-truth clipped graph $G_{clip}$ is bounded with high probability by $\tilde{O}\!\left( \frac{N \cdot w_{max} \sqrt{m \cdot D_{max}}}{\epsilon^{3/2}} \right)$, where $m$ is the total edge weight, $N$ is the number of vertices, and $w_{max}$ is the maximum edge weight.
\end{theorem}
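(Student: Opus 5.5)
The plan is to view the synthesis phase as an instance of private mirror descent in the style of the MWEM / private cut-approximation analyses. In that framework, the error of the released object in a family of linear "cut" queries is controlled by three quantities: the entropy diameter of the domain, the per-step noise, and the number of iterations $T$. I will treat the triangle-motif cut capacity of a vertex set $S$ as a linear functional $q_S(W)=\langle M_S, W\rangle$. Here $M_S$ is the motif-weighted indicator of edges crossing $S$ in the motif adjacency of $G_{clip}$. I will also assume that $W$ ranges over the scaled simplex $\{W\ge 0,\ \sum_{ij}W_{ij}=m\}$, so that the negative-entropy mirror map is the natural choice. By degree clipping, every edge lies in at most $D_{max}$ triangles, so each entry of the motif adjacency is at most $D_{max} w_{max}$. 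This is the same quantity that gives $\Delta_2=O(D_{max} w_{max})$ in the Privacy Guarantee theorem.

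\textbf{Step 1 (optimization error).} For the noiseless objective, I will invoke the standard mirror-descent regret bound with the KL regularizer. The suboptimality after $T$ steps is $O(G_\infty\, m\sqrt{\log(N^2)/T})$, where $G_\infty$ bounds the sup-norm of the gradient; $G_\infty=O(w_{max}D_{max})$ follows from the clipping. Since the loss is a maximum over cut queries, or a convex surrogate of it, the regret translates directly into a bound on $\max_S|q_S(W^{(T)})-q_S(G_{clip})|$. I will use the averaged iterate, and will argue that the final iterate differs only in constants.

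\textbf{Step 2 (noise error).} The injected perturbation adds a term $\frac{1}{T}\sum_t\langle Z_t, W^{(t)}-W^\star\rangle$. With $\sigma\propto \Delta_2\sqrt{T\ln(1/\delta)}/\epsilon$, taken from the privacy proof, a union bound over the $2^N$ cuts gives a sub-Gaussian tail. This contributes a factor $\sqrt{N}$ from $\log 2^N$ and a factor $N$ from the dimension of the noisy gradient as it acts on $\|W\|_1=m$. The contribution is $\tilde O(m\,\sigma\sqrt{N}/\sqrt{T})$ up to an extra $\sqrt{N}$. I will then balance Step 1 against Step 2 by choosing $T$, which I expect to be of order $\epsilon\sqrt{m/ D_{max}}$ or similar. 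That choice should produce $\sqrt{m D_{max}}$ and the $\epsilon^{-3/2}$ dependence: one factor $\epsilon^{-1}$ comes from $\sigma$ and a further $\epsilon^{-1/2}$ comes from the choice of $T$.

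\textbf{Step 3 (post-processing).} Finally, I will show that the rectified log-normalization does not destroy the bound. The map $x\mapsto\max(0,x)$ is a $1$-Lipschitz projection onto the nonnegative orthant, so clamping can only reduce the distance to the nonnegative target $G_{clip}$. The monotone rescaling by $\log(1+\cdot)$ I will handle by stating the bound for $W'$ and transferring it to $\hat W$ via the range $[0,1]$. This is where the $\tilde O$ hides logarithmic factors. The factor $N$ in the stated bound absorbs the crude $\ell_1$-to-cut conversion.

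The main obstacle is Step 2 together with the tuning of $T$. The advanced-composition calibration makes $\sigma$ grow like $\sqrt T$, so the noise term does not vanish as $T$ grows. Getting exactly the exponents $\sqrt{mD_{max}}$ and $\epsilon^{3/2}$ requires a careful choice of $T$ and of the query class. If the direct balancing does not yield these exponents, I will instead follow the MWEM-style analysis. There, each round privately selects a violated cut via the exponential mechanism, and only the winning query's gradient is used. That route yields the characteristic $m^{1/2}/\epsilon^{3/2}$ rate naturally, and I would then identify the noisy gradient step with that selection.
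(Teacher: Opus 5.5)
Your overall shape matches the paper's outline: entropic mirror descent, clipping giving $O(D_{max}w_{max})$ gradient bounds, and an optimization-versus-noise trade-off in $T$. There is, however, a genuine gap at the first modeling step, and it is exactly the point the paper's proof is built around.

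The triangle-motif cut capacity of a weighted graph is trilinear (cubic) in its edge weights, so the synthesis objective is non-convex. The paper handles this in two moves. It adds the quadratic regularizer $\lambda\sum_e (w_e-\bar w_e)^2$ centered at the empirical weights. It then invokes Danskin's theorem to make the regularized dual objective strictly convex and differentiable, and only then runs mirror descent on it.

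You instead set $q_S(W)=\langle M_S,W\rangle$ with $M_S$ built from the motif adjacency of $G_{clip}$. That functional is not the motif cut of $W$. At $W=\hat W$ it is the ordinary cut of $\hat W$ reweighted by $G_{clip}$'s triangle participation, so controlling $\max_S|q_S(\hat W)-q_S(G_{clip})|$ says nothing about the triangles of $\hat W$ itself. It also makes the query family depend on the private graph. The gradient $\pm M_{S^\star(W)}$ of your max-over-cuts objective is therefore not $O(D_{max}w_{max})$-sensitive: one edge change can alter $M_S$ and flip the maximizing cut $S^\star$, replacing the gradient by an essentially unrelated matrix. The noise calibration you import from the Privacy Guarantee theorem concerns the paper's triangle-count matching loss, not a max over cuts, so it does not apply to the object you analyze.

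Even granting the linear model, Steps 1--2 cannot produce the stated rate. Your noise term $\tilde O(m\sigma\sqrt N/\sqrt T)$ with $\sigma\propto\Delta_2\sqrt{T\ln(1/\delta)}/\epsilon$ equals $\tilde O(m D_{max} w_{max}\sqrt N/\epsilon)$. This is independent of $T$, as you yourself observe at the end, so there is nothing to balance against the $O(1/\sqrt T)$ optimization term. The best you obtain is linear in $m$ and $D_{max}$ with an $\epsilon^{-1}$ dependence. That exceeds the claimed bound by a factor of order $\sqrt{m D_{max}\epsilon/N}$. The paper's balancing needs a privacy term growing like $\sqrt T$ against an optimization term decaying like $1/\sqrt T$ on the convexified objective; your decomposition does not have that shape.

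The MWEM fallback analyzes a different mechanism, namely exponential-mechanism selection of one cut per round. That cannot be identified with DMP-MH's Gaussian perturbation of the full motif gradient, so it would not be a statement about the $\hat W$ in the theorem. Its natural $(\epsilon,\delta)$ rate, after balancing selection error against multiplicative-weights convergence, also scales as $\epsilon^{-1/2}$, not $\epsilon^{-3/2}$.

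Step 3 also fails as written. Entropic mirror-descent iterates are already positive, so the clamp is vacuous in your model; in any case, entrywise contraction does not control signed cut sums. The normalization $\log(1+x)/\max_{uv}\log(1+W'_{uv})$ is concave and rescales data-dependently into $[0,1]$. Cut values of $\hat W$ therefore live on a different scale from those of $G_{clip}$, and an additive bound for $W'$ does not transfer. The paper's sketch simply states the bound for the synthesis output.
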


\noindent \textit{Proof Sketch.} The utility analysis addresses the inherent non-convexity of the triangle-motif cut objective. Direct optimization of the cut difference would fail to guarantee convergence to a globally optimal stationary point because of the trilinear nature of the triangle tensor. To rectify this, our objective function incorporates a quadratic convexity regularizer, $\lambda \sum_e (w_e - \bar{w}_e)^2$, centered on the empirical weights. By applying Danskin's Theorem~\cite{bertsekas1999nonlinear}, the regularized dual objective becomes strictly convex and differentiable. This transformation converts the intractable synthesis problem into a solvable convex optimization task amenable to Stochastic Mirror Descent, ensuring convergence to a verifiable optimum rather than a local saddle point.

The derived error bound characterizes the fundamental trade-off between privacy and structural fidelity. The total approximation error decomposes into an optimization error that decays as $O(1/\sqrt{T})$ and a privacy estimation error that accumulates as $O(\sqrt{T})$ due to the injected gradient noise. The optimal number of iterations $T^*$ balances these competing terms. At this equilibrium, the error is dominated by the privacy cost, scaling with the square root of the total graph volume $m$ and the local density $D_\text{max}$. Crucially, the dependence on $\sqrt{m \cdot D_{max}}$ rather than $N^2$ indicates that our method achieves sub-linear error scaling relative to the graph size. This confirms that the synthetic graph $\hat{W}$ retains a statistically significant portion of the community structure even in high-privacy regimes, providing the hashing network with a valid geometric supervision signal that converges to the true semantic manifold.

\begin{table*}[t]\centering
\caption{Mean Average Precision (mAP@50) Comparison on NUS-WIDE Dataset. The best non-private results are underlined; the best privacy-preserving results are bolded. \label{tab:nus-wide-exp}}
\begin{tabular}{c|c|c|c|c|c|c|c|c}\toprule
Method          & Type             & I $\to$ T (16 bits) & I $\to$ T (32 bits) & I $\to$ T (64 bits) & T $\to$ I (16 bits) & T $\to$ I (32 bits) & T $\to$ I (64 bits) & Avg mAP        \\\midrule
LSH             & Classic          & 0.452           & 0.465           & 0.478           & 0.448           & 0.459           & 0.471           & 0.462          \\
SH              & Classic          & 0.485           & 0.498           & 0.512           & 0.479           & 0.491           & 0.505           & 0.495          \\\midrule
DCMH            & Deep             & 0.685           & 0.702           & 0.715           & 0.725           & 0.741           & 0.752           & 0.720          \\
DJSRH           & Deep             & 0.718           & 0.735           & 0.748           & 0.755           & 0.772           & 0.785           & 0.752          \\
JDSH            & Deep             & \underline{0.732}           & \underline{0.749}           & \underline{0.762}           & 0.758           & \underline{0.785}           & \underline{0.798}           & \underline{0.764}          \\
PRDH            & Deep             & 0.725           & 0.741           & 0.755           & \underline{0.761}           & 0.778           & 0.791           & 0.759          \\\midrule
DP-SGD          & Private          & 0.545           & 0.562           & 0.578           & 0.585           & 0.602           & 0.615           & 0.581          \\
PPPL            & Private          & 0.615           & 0.632           & 0.648           & 0.655           & 0.672           & 0.685           & 0.651          \\
\textbf{DMP-MH} & \textbf{Private} & \textbf{0.672}  & \textbf{0.689}  & \textbf{0.702}  & \textbf{0.685}  & \textbf{0.705}  & \textbf{0.718}  & \textbf{0.695} \\\bottomrule
\end{tabular}
\end{table*}

\section{Experimental Setup}
To ensure full reproducibility and facilitate rigorous auditing of the privacy-utility trade-offs, we detail our experimental infrastructure, protocols, and implementation specifics below. All source code, pre-processed graph structures, and trained model checkpoints will be publicly released upon acceptance.

We evaluated the proposed DMP-MH framework on two standard large-scale benchmarks for cross-modal retrieval. To adhere strictly to the inductive learning setting, all graph construction and private synthesis steps were performed exclusively on the training split.
\begin{itemize}
    \item MIRFlickr-25K~\cite{huiskes2008mir}: A widely adopted dataset comprising 25,000 images collected from Flickr, each associated with multiple textual tags. Following standard protocols, we removed instances without textual tags, resulting in 20,015 distinct image-text pairs. We randomly sampled 2,000 pairs as the query set (test split) and used the remaining 18,015 pairs as the retrieval database. For training, we sampled 10,000 pairs from the database.
    \item NUS-WIDE~\cite{chua2009nus}: A massive-scale web image dataset containing 269,648 images associated with 81 ground-truth concepts. We followed the widely used pruning strategy (e.g., in JDSH), filtering for image-text pairs belonging to the 21 most frequent concepts. This resulted in approximately $195,834$ pairs. We randomly selected 2,100 pairs as the query set, with the remainder serving as the database. A subset of 10,500 pairs from the database was used for training.
\end{itemize}
Raw images were resized to $256 \times 256$ pixels. During training, we applied random cropping ($224 \times 224$) and random horizontal flipping to enhance invariance. During inference, a center crop was used. For deep hashing methods (including DMP-MH), we utilized raw text inputs tokenized by the DistilBERT tokenizer with a maximum sequence length of 512 tokens. For shallow baselines (e.g., LSH, SH) that require vector inputs, we extracted 1,386-D Bag-of-Words (BoW) vectors for MIRFlickr and 1,000-D tag vectors for NUS-WIDE.

\subsection{Evaluation Metrics}
We report the Mean Average Precision (mAP@TopK). We set $K=50$ for MIRFlickr and $K=5000$ for the larger NUS-WIDE. mAP is the gold standard in hashing literature as it penalizes relevant items that appear lower in the ranking order. We evaluated two tasks: Image-to-Text ($I \to T$) and Text-to-Image ($T \to I$). To quantify the preservation of graph topology under privacy constraints, we measured the Triangle Count Error (TCE), defined as the normalized absolute difference between the triangle counts in the induced Hamming graph of the retrieved items and the ground truth semantic graph.

\subsection{Baselines}
We compared DMP-MH against a comprehensive suite of baselines, categorized into three groups:

\textbf{Classic Hashing Methods}: These methods represent foundational approaches that do not leverage deep learning. LSH (Locality-Sensitive Hashing)~\cite{gionis1999similarity} employs random projections to generate hash codes, providing a data-independent baseline. SH (Spectral Hashing)~\cite{weiss2008spectral} learns compact binary codes by solving a graph partitioning problem using spectral decomposition of the data similarity matrix.

\textbf{Deep Non-Private Methods}: These represent state-of-the-art deep cross-modal hashing approaches that do not incorporate privacy mechanisms. DCMH (Deep Cross-Modal Hashing)~\cite{jiang2017deep} jointly learns feature representations and hash codes through an end-to-end framework with pairwise similarity preservation. DJSRH (Deep Joint-Semantics Reconstructing Hashing)~\cite{su2019deep} reconstructs semantic affinities in an unsupervised manner by leveraging joint semantic information across modalities. JDSH (Joint-modal Distribution-based Similarity Hashing)~\cite{liu2020joint} models the joint distribution of multi-modal features to capture fine-grained semantic similarities for unsupervised cross-modal retrieval. PRDH (Pairwise Relationship guided Deep Hashing)~\cite{yang2017pairwise} exploits pairwise relationships to guide the learning of discriminative hash codes.

\textbf{Privacy-Preserving Methods}: These baselines incorporate differential privacy mechanisms. DP-SGD~\cite{abadi2016deep} represents the canonical approach of applying Differentially Private Stochastic Gradient Descent directly to deep hashing networks, where gradients are clipped and perturbed with Gaussian noise during training. We adapted this method for cross-modal hashing by applying the standard gradient perturbation protocol to a DCMH-style architecture. PPPL (Proactive Privacy-Preserving Learning)~\cite{zhang2023proactive} is a recent privacy-preserving cross-modal retrieval method that employs a proactive mechanism to protect user data while maintaining retrieval performance.

For fair comparison, all deep learning methods (including DMP-MH) used ResNet-50~\cite{he2016deep} as the image encoder and DistilBERT~\cite{sanh2019distilbert} as the text encoder. Privacy-preserving baselines were evaluated under identical privacy budgets ($\epsilon=2.0, \delta=10^{-5}$)---the standard moderate budget in the DP literature and the setting reported in Table~\ref{tab:mirflickr-25k-exp} and Table~\ref{tab:nus-wide-exp}---to ensure a rigorous comparison. All experiments were repeated five times with different random seeds, and we report the mean performance.

\section{Experimental Results}
Table~\ref{tab:mirflickr-25k-exp} and Table~\ref{tab:nus-wide-exp} present the Mean Average Precision (mAP@50) comparison across all baselines on the MIRFlickr-25K and NUS-WIDE datasets, respectively. We evaluate both retrieval directions—Image-to-Text (I$\to$T) and Text-to-Image (T$\to$I) across three hash code lengths (16, 32, and 64 bits). From these results, we draw several key observations.

\textbf{DMP-MH achieves state-of-the-art performance among privacy-preserving methods.} On MIRFlickr-25K, DMP-MH attains an average mAP of 0.731 (averaged across both retrieval directions and the three bit-lengths), outperforming the strongest private baseline PPPL (0.689) by 4.2 percentage points and DP-SGD (0.627) by 10.4 percentage points. A similar trend holds on NUS-WIDE, where DMP-MH (0.695) surpasses PPPL (0.651) by 4.4 percentage points and DP-SGD (0.581) by 11.4 percentage points. These consistent improvements across both datasets validate the effectiveness of our Sanitize-then-Distill paradigm over direct gradient perturbation approaches.

\textbf{The performance gap between private and non-private methods remains acceptable.} Compared to the best non-private method JDSH, DMP-MH retains approximately 92.5\% of retrieval performance on MIRFlickr-25K (0.731 vs. 0.790) and 91.0\% on NUS-WIDE (0.695 vs. 0.764). This modest degradation demonstrates that our sensitivity-bounded graph construction successfully preserves the essential community structures required for effective cross-modal retrieval, even under strict differential privacy constraints.

\textbf{DP-SGD suffers from severe performance degradation on graph-structured data.} The substantial gap between DP-SGD and other methods (including classic approaches like SH on MIRFlickr-25K) corroborates our theoretical argument that gradient perturbation methods are fundamentally ill-suited for graph-based learning. By treating training samples as independent entities, DP-SGD destroys the fine-grained relational dependencies encoded in the semantic similarity graph, resulting in hash codes that fail to capture meaningful community structures.

\textbf{DMP-MH maintains balanced performance across retrieval directions.} A notable weakness of many cross-modal hashing methods is asymmetric performance between I$\to$T and T$\to$I tasks, often caused by the disconnected stream phenomenon discussed in Section~\ref{sec:dual-stream}. Our holistic structural loss explicitly enforces alignment in both directions. On MIRFlickr-25K, DMP-MH achieves 0.725 (I$\to$T, 16-bit) and 0.712 (T$\to$I, 16-bit), a gap of merely 1.3 percentage points. In contrast, DP-SGD exhibits a gap of 1.6 percentage points on the same configuration. This balanced performance confirms that our dual-stream architecture successfully learns a shared semantic manifold.

\textbf{Performance improves consistently with increasing code length.} Across all methods and datasets, longer hash codes yield better retrieval accuracy. For DMP-MH on MIRFlickr-25K, mAP increases from 0.725 (16-bit) to 0.748 (64-bit) for I$\to$T retrieval, representing a 2.3 percentage point improvement. This trend indicates that our framework effectively utilizes the additional capacity provided by longer codes without overfitting to noise artifacts introduced by the privacy mechanism.

\textbf{The advantage of DMP-MH is more pronounced on larger datasets.} Comparing the relative improvements over DP-SGD, DMP-MH achieves a larger margin on NUS-WIDE (11.4 points) than on MIRFlickr-25K (10.4 points). This observation aligns with our theoretical analysis in Theorem 2, which establishes that the error bound scales sub-linearly with graph size. As datasets grow larger, the benefits of our sensitivity-bounded construction become more significant, as naive approaches would require prohibitively large noise magnitudes to satisfy differential privacy.

\subsection{Privacy Budget Sensitivity Analysis}
A central contribution of this work is the ability to maintain high utility even under strict privacy regimes. To characterize this privacy-utility trade-off, we evaluated retrieval performance (I $\to$ T mAP@50) on MIRFlickr-25K across a spectrum of privacy budgets, with $\delta$ fixed at $10^{-5}$. We compared DMP-MH against the strongest private baseline and standard DP-SGD.

\begin{table}[t]\centering
\caption{Privacy-Utility Trade-off Analysis (I $\to$ T mAP@50 on MIRFlickr-25K, 64 bits).
$\epsilon=\infty$ represents the non-private upper bound (JDSH).\label{tab:privacy-utility}}
\begin{tabular}{M{1.8cm}|c|c|c|M{1.4cm}}\toprule
Privacy Budget ($\epsilon$)                   & DMP-MH         & PPPL  & DP-SGD & Gap to Non-Private \\\midrule
$\epsilon = 0.1$ (Strict)         & \textbf{0.625} & 0.542 & 0.475  & -0.179             \\\hline
$\epsilon = 0.5$                  & \textbf{0.685} & 0.615 & 0.525  & -0.119             \\\hline
$\epsilon = 1.0$                  & \textbf{0.715} & 0.668 & 0.585  & -0.089             \\\hline
$\epsilon = 2.0$ (Standard)       & \textbf{0.748} & 0.705 & 0.648  & -0.056             \\\hline
$\epsilon = 5.0$                  & \textbf{0.772} & 0.735 & 0.695  & -0.032             \\\hline
$\epsilon = 10.0$ (Relaxed)       & \textbf{0.785} & 0.758 & 0.725  & -0.019             \\\hline
$\epsilon = \infty$ & 0.804          & 0.804 & 0.804  & 0.000             \\\bottomrule
\end{tabular}
\end{table}

The experimental results presented in Table~\ref{tab:privacy-utility} reveal a distinct resilience gap. At strict privacy levels ($\epsilon \le 0.5$), standard mechanisms like DP-SGD collapse because the gradient noise variance exceeds the informative signal. In contrast, DMP-MH exhibits graceful degradation. Even at $\epsilon=0.1$, it retains an mAP of $0.625$, preserving 77\% of the non-private utility. This resilience stems from our Sanitize-then-Distill paradigm: the hashing network is trained on a fixed, sanitized target $\hat{W}$, shielding the optimization landscape from the iteration-dependent noise accumulation that plagues DP-SGD.

\subsection{Sensitivity Bound Analysis}
The degree clipping threshold $D_\text{max}$ represents a fundamental bias-variance trade-off. A low $D_\text{max}$ minimizes sensitivity (low noise) but discards true semantic connections (high bias). Conversely, a high $D_\text{max}$ preserves structure but inflates sensitivity, necessitating larger noise injection. We investigated this dynamic by varying $D_\text{max}$ on MIRFlickr-25K with privacy budget $\epsilon = 2.0$, measuring both retrieval performance (mAP@50) and structural fidelity (Triangle Count Error, TCE).

\begin{table}[H]\centering
\caption{Impact of Degree Clipping Threshold $D_\text{max}$.
mAP@50 (Higher is better) vs. Triangle Count Error (TCE, Lower is better).\label{tab:exp-sensitivity-analysis}}
\begin{tabular}{c|c|c|p{5cm}}\toprule
$D_\text{max}$ & mAP & TCE & \multicolumn{1}{c}{Phenomenon}\\\midrule
10           & 0.612          & 0.358          & High Bias: Aggressive pruning destroys community structure.                  \\\hline
20           & 0.685          & 0.125          & Improved retention, but graph remains too sparse.                            \\\hline
50           & 0.738          & 0.055          & Near-optimal balance.\\\hline
\textbf{100} & \textbf{0.748} & \textbf{0.042} & Optimal: Sufficient structure with manageable noise scale.                   \\\hline
200          & 0.715          & 0.089          & Noise Injection: Sensitivity doubles; noise $\sigma$ begins to drown signal. \\\hline
500          & 0.645          & 0.215          & Hubness Explosion: High sensitivity requires excessive noise; utility drops.\\\bottomrule
\end{tabular}
\end{table}

In Table~\ref{tab:exp-sensitivity-analysis}, we observe an inverted-U performance curve that empirically validates our theoretical analysis. Performance peaks at $D_\text{max}=100$: at this threshold, the clipped graph retains approximately 95.8\% of the original triangle motifs while maintaining a bounded sensitivity $\Delta_2 = O(D_\text{max}\cdot w_\text{max})$ that permits moderate noise injection. Below this range, the graph is over-pruned. Above it, the linear growth in sensitivity ($\Delta_2 \propto D_\text{max}$) forces a corresponding increase in the noise scale $\sigma$, which obscures the structural signal. These results empirically confirm the existence of an optimal structural bandwidth for differentially private graphs.

\subsection{Scalability and Computational Efficiency}
Finally, we substantiate the scalability claims of DMP-MH by benchmarking the computational cost on the large-scale NUS-WIDE dataset ($N \approx 200k$). We report the breakdown of Wall-Clock Time (Graph Construction, Synthesis, Training) and Peak GPU Memory usage compared to DP-SGD and PPPL.
\begin{table}[H]\centering
\caption{Scalability Analysis on NUS-WIDE (100 Epochs). Time reported in minutes (min). Memory in GB (Peak VRAM).\label{tab:exp-scalability}}
\resizebox{.49\textwidth}{!}{
\begin{tabular}{c|M{1.2cm}|M{1.2cm}|M{1.2cm}|M{1.1cm}|M{1.1cm}}\toprule
Method & Phase 1 (Const.) & Phase 2 (Synth.) & Phase 3 (Train) & Total Time    & Peak Mem.     \\\midrule
JDSH & 12min & - & 145min & \textbf{157min} & 18.5 GB          \\
DP-SGD & 12min & -                & 420min & 432min          & 24.2 GB          \\
PPPL & 12min & 85min & 180min & 277min & 21.0 GB \\
\textbf{DMP-MH}    & 15min & 35min & 150min            & \textbf{200min} & \textbf{12.5 GB}\\\bottomrule
\end{tabular}}
\end{table}

In Table~\ref{tab:exp-scalability}, DMP-MH achieves a $2.1\times$ speedup over DP-SGD. While we incur a fixed overhead for Private Graph Synthesis (Phase 2, 35 mins), this investment enables Phase 3 to proceed at the speed of non-private training (150 mins). In contrast, DP-SGD is substantially slower (420 mins) due to the overhead of per-sample gradient clipping. Furthermore, DMP-MH exhibits the lowest peak memory consumption (12.5 GB), as we distill from a sparse synthetic graph rather than dense pairwise matrices, making the method deployable on standard consumer hardware.

\section{Conclusions}
We presented DMP-MH, a novel framework for privacy-preserving cross-modal hashing that addresses the fundamental tension between leveraging graph-based semantic supervision and protecting sensitive user interactions. Our work makes four primary contributions. First, we identified Hubness Explosion---the unbounded local sensitivity caused by hub nodes in scale-free networks---as the critical barrier preventing effective differential privacy for graph-supervised hashing. Second, we proposed a Sanitize-then-Distill paradigm that decouples privacy preservation from representation learning through deterministic degree clipping, Noisy Mirror Descent synthesis, and rectified log-normalization. Third, we introduced a Holistic Structural Loss that enforces both intra-modal and cross-modal alignment, resolving the modality misalignment problem common in unsupervised hashing. Fourth, we provided theoretical privacy and utility guarantees, showing that the synthetic graph error scales sub-linearly with dataset size under bounded local sensitivity.
% Extensive experiments on MIRFlickr-25K and NUS-WIDE demonstrated that DMP-MH outperforms existing private baselines by 4.2--11.4 mAP points while retaining over 90\% of non-private state-of-the-art performance at a moderate privacy budget ($\epsilon=2.0$). 
Notably, DMP-MH exhibits graceful degradation in strict privacy regimes ($\epsilon=0.1$), preserving 77\% utility where competing methods collapse. Our framework additionally achieves $2.1\times$ faster training and 48\% lower memory consumption than DP-SGD, enabling practical deployment. 
% Future work will explore adaptive clipping strategies, extension to other modalities (audio, video), and integration with federated learning for distributed privacy-preserving retrieval. We believe DMP-MH establishes a principled foundation for deploying cross-modal retrieval systems that respect user privacy without sacrificing retrieval quality.
\clearpage
\bibliographystyle{ACM-Reference-Format}
\bibliography{sample-base}
\end{document}